\theoremstyle{plain}
\newtheorem{theorem}{Theorem}
\theoremstyle{definition}
\newtheorem{exatitle}{Example}
{\begin{exatitle} \label{#2} #1 \end{exatitle}}%
{\hfill $\Box$ \\}
\newlength{\figurewidth}
\newcommand{\ket}[1]{| #1 \rangle}
\newcommand{\bra}[1]{\langle #1 |}
\newcommand{\braket}[2]{\langle #1 | #2 \rangle}
\newcommand{\pket}[1]{[ #1 ]}
\newcommand{\td}{\text{d}}
\newcommand{\eg}{\hbox{\em e.g.{}}}
\newcommand{\etc}{\hbox{\em etc.{}}}
\newcommand{\ie}{\hbox{\em i.e.{}}}
\newcommand{\rhs}{\hbox{r.h.s.{}}}
\g@addto@macro\bfseries{\boldmath}
\begin{document}

\title{Toponomic Quantum Computation}

\author{C.{} Chryssomalakos}
\email{chryss@nucleares.unam.mx}
\affiliation{Instituto de Ciencias Nucleares \\
	Universidad Nacional Aut\'onoma de M\'exico\\
	PO Box 70-543, 04510, CDMX, M\'exico.}

\author{E.{} Guzm\'an-Gonz\'alez}
\email{edgar.guzman@correo.nucleares.unam.mx}
\affiliation{Departamento de F\'{\i}sica \\
	Universidad Aut\'onoma Metropolitana-Iztapalapa\\
PO Box 55-534, 09340, CDMX, M\'exico.}

\author{L.{} Hanotel}
\email{hanotel@correo.nucleares.unam.mx}	
\affiliation{Instituto de Ciencias Nucleares \\
	Universidad Nacional Aut\'onoma de M\'exico\\
	PO Box 70-543, 04510, CDMX, M\'exico.}

\author{E.{} Serrano-Ens\'astiga}
\email{edensastiga@ens.cnyn.unam.mx}
\affiliation{Centro de Nanociencias y Nanotecnolog\'{\i}a,
Universidad Nacional Aut\'onoma de M\'exico\\
PO Box 14, 22800, Ensenada, Baja California, M\'exico
}

\begin{abstract}
	\noindent 
Holonomic quantum computation makes use of non-abelian geometric phases, associated to the evolution of  a subspace of quantum states, to encode logical gates. We identify a special class of subspaces, for which a sequence of rotations results in a non-abelian holonomy of a topological nature, so that it is invariant under any $SO(3)$-perturbation. Making use of a Majorana-like stellar representation for subspaces, we give explicit examples of  topological-holonomic (or \emph{toponomic})  NOT and CNOT gates.
\end{abstract}

\maketitle


It is well known that cyclic evolution of a quantum state $\pket{\psi}$ in the projective Hilbert space $\mathbb{P}(\mathcal{H}) \equiv \mathbb{P}$ of a physical system  gives rise to a geometric phase, which is invariant under time-reparametrizatons of the curve $C(t)$ traced in $\mathbb{P}$~\cite{Ber:84} (we denote by $\pket{\psi}$ points in $\mathbb{P}$, \ie, equivalence classes of normalized states $\ket{\psi}$ in the Hilbert space $\mathcal{H}$ that only differ  by an overall phase). The non-abelian generalization of this effect, due to Wilczek and Zee~\cite{Wil.Zee:84}, involves the cyclic evolution of a $k$-dimensional degenerate subspace $\Pi$ of the Hilbert space $\mathcal{H}$, in which case the phase factor mentioned above gets replaced by a unitary $k \times k$ matrix $U$ which acts in $\Pi$ --- as in the abelian case, $U$ is reparametrization invariant. This latter fact has prompted the suggestion of using such holonomies~\cite{Sim:83} in the implementation of logical gates in quantum computing, as it provides a certain level of immunity from noise effects~\cite{Zan.Ras:99,solinas2004robustness}.

Specifying the above to the case of a spin-$s$ system, so that $\mathcal{H}=\mathbb{C}^N$, $N=2s+1$, the simplest realization of a cyclic evolution of a state in $\mathbb{P}$ is through a sequence of rotations $R_t$ in $SO(3)$ (with $R_0$ being the identity operation), 
\begin{equation}
\label{psitR}
\pket{\psi_t}=\pket{D^{(s)}(\tilde{R}_t)\ket{\psi_0}}
\, ,
\end{equation}
as in the familiar example of a precessing spin, where $D^{(s)}(\tilde{R}_t)$ is the spin-$s$ irreducible representation of the lift $\tilde{R}_t$ of $R_t$ in $SU(2)$ --- said lift is unique, if one specifies that $\tilde{R}_0=I$ (rather than $-I$). For a general state $\pket{\psi_0}$, cyclicity implies that $R_t$ must start and end at the identity of $SO(3)$, $R_0=R_T=I$, but when $\pket{\psi_0}$ has a nontrivial discrete rotational symmetry group $\Gamma=\{R_{(0)}\equiv I, R_{(1)},\ldots,R_{(p)}\}$, $R_t$ might also  end at any of the symmetry rotations $R_{(i)} \in \Gamma$. In this latter case, it has been shown in~\cite{Agu.Chr.Guz.Han.Ser:19}, that when $\ket{\psi_0}$ is \emph{anticoherent}~\cite{Zim:06}, meaning its spin expectation value vanishes, the geometric phase acquired is of a topological character, and is thus invariant under arbitrary (not necessarily small) smooth perturbations. The principal aim of the present work is a generalization of these considerations to the case of the non-abelian holonomies mentioned above. To this end, we need to synthesize a number of results, some of them quite new.

The first ingredient we will need is a formulation of the Wilczek-Zee (WZ) effect, that parallels the treatment of the abelian case in~\cite{Muk.Sim:93}. In that reference, given a curve $\pket{\psi_t}$, $0 \leq t \leq T$, traced by a state  in $\mathbb{P}$, one chooses arbitrarily but smoothly a phase for each $t$, obtaining a lift  $\ket{\psi_t}$ in $\mathcal{H}$. In terms of the latter, the geometric phase assigned to $\pket{\psi_t}$ is given by
\begin{align}
\label{gpMS}
\varphi_\text{geo}
&=
\varphi_\text{tot}-\varphi_{\text{dyn}}
\nonumber
\\
&=
\arg \braket{\psi(0)}{\psi(T)}
+
i\int_0^T dt \, \bra{\psi(t)} \partial_t \ket{\psi(t)}
\, ,
\end{align}
where  the two terms  on the right hand side are known as the \emph{total} and \emph{dynamical} phase, respectively --- it is easily seen that $\varphi_\text{geo}$ does not depend on the particular lift $\ket{\psi_t}$ chosen, so it is a property of the curve $\pket{\psi_t}$ in $\mathbb{P}$.

The above physical quantum states $\pket{\psi}$ correspond to rays in $\mathcal{H}$, and the set of such rays is the projective space $\mathbb{P}$. Similarly, $k$-dimensional subspaces in $\mathcal{H}$, of the type that appear in the WZ effect, form the Grassmannian $\text{Gr}(k,N)$ (note that $\text{Gr}(1,N)=\mathbb{P}$) and the cyclic evolution of such a subspace corresponds to a closed curve  $\Pi(t)$, in $\text{Gr}(k,N)$, with $0 \leq t \leq T$ and $\Pi(0)=\Pi(T)$. Given an arbitrary orthonormal basis $\{\ket{\phi_i(t)}\}$, $i=1,\ldots,k$ in $\Pi(t)$, with $\ket{\phi_i(0)}$ not necessarily equal to $\ket{\phi_i(T)}$ (since a given plane may be spanned by many different bases), define the matrix $Q$ with entries $Q_{ij}=\braket{\phi_i(0)}{\phi_j(T)}$, and the Wilczek-Zee connection $\mathcal{A}(t)$, with entries $\mathcal{A}_{ij}(t)=\braket{\phi_i(t)}{\dot{\phi}_j(t)}$, the dot denoting time derivative. The polar part of $Q$ is given by $P=WV^\dagger$, where $Q=WDV^\dagger$ is the singular value decomposition of $Q$, with $W$, $V$ unitary, and $D$ diagonal. Then to the  curve $\Pi(t)$ we assign, following~\cite{Kul.Abe.Sjo:06} (see also~\cite{Sjo:15}), the unitary holonomy 
\begin{equation}
\label{WZholo}
U_\text{geo}=P F^{-1}
\, ,
\end{equation}
where 
\begin{equation}
\label{Fdef}
F=\text{Pexp}\left( \int_0^T \td t \mathcal{A}(t) \right)
\end{equation}
is the path-ordered exponential of the WZ connection --- this is clearly a nonabelian version of the exponentiation of~(\ref{gpMS}). It can be shown that $U_\text{geo}$ transforms covariantly under changes in the choice of the basis $\{\ket{\phi_i}\}$, and that it reproduces the standard WZ holonomy when applied to closed curves --- it is worth remarking  though that (\ref{WZholo}) may also be applied to open curves.

The second ingredient we will need is a generalization of the concept of anticoherent spin states. Consider a hermitean matrix $A$, acting on $\mathcal{H}$, and the action of an element $g$ of $SU(2)$ on it via conjugation,
\begin{equation}
\label{SU2act}
A \mapsto A'=D^{(s)}(g) A D^{(s)}(g)^{-1}
\, .
\end{equation}
A natural question to ask is whether this transformation mixes all hermitean $N \times N$ matrices, or whether there are subspaces of hermitean matrices that only transform among themselves. The answer  is that there is a 1-dimensional subspace (multiples of the identity matrix) that is invariant under this action, then a 3-dimensional subspace, spanned by matrices $T^{(s)}_{1,1}$, $T^{(s)}_{1,0}$, $T^{(s)}_{1,-1}$, that only transform among themselves, then a 5-dimensional subspace, \etc, up to a $(4s+1)$-dimensional invariant subspace --- the matrices $T^{(s)}_{\ell m}$, $\ell=0,1,\ldots,2s$, $-\ell\leq m \leq \ell$, are called \emph{spin-$s$ polarization tensors}~\cite{Var.Mos.Khe:88}. In particular, the triplet $T^{(s)}_{1m}$, $m=\pm 1,0$, are, up to normalization, the three $\mathfrak{su}(2)$ generators, $S_{\pm}$, $S_z$, in the spin-$s$ representation. 
We  define now a \emph{$t$-anticoherent spin-$s$ $k$-plane} $\Pi \in \text{Gr}(k,N)$ by the requirement that any basis $\{ \ket{\psi_i} \}$, $i=1,\ldots,k$ in $\Pi$ satisfies 
\begin{equation}
\label{anticohPi}
\bra{\psi_i}T^{(s)}_{\ell m}\ket{\psi_j}=0
\, ,
\end{equation}
for $1 \leq \ell \leq t$,
$
-\ell \leq m \leq \ell$,
$i,j=1,\ldots,k$.
Note that, just like for the $k=1$ definition, the above notion of $t$-anticoherence is rotationally invariant. We will actually only consider here 1-anticoherent planes, for which 
\begin{equation}
\label{oap}
\bra{\psi_i}\mathbf{S}^{(s)} \ket{\psi_j}=0
\, ,
\end{equation}
where $\mathbf{S}^{(s)}=(S^{(s)}_x,S^{(s)}_y,S^{(s)}_z)$ is the vector of the $\mathfrak{su}(2)$ generators, in the spin-$s$ representation --- this particular case can be related to the concept of anticoherent subspaces in~\cite{Per.Pau:17}.

Note that $SU(2)$ acts on a $k$-plane in $\mathcal{H}$ by transforming the elements of any basis that generates it, \ie, if $\Pi \in \text{Gr}(k,N)$ is generated by $\{\ket{\phi_i} \}$, $i=1,\ldots,k$, then the transformed plane $\Pi_g$, $g \in SU(2)$, is generated by $\{D^{(s)}(g)\ket{\phi_i} \}$, $i=1,\ldots,k$. Of crucial importance to our subsequent discussion is the orbit $\mathcal{O}_\Pi$  of $\Pi$ under  this action, $\ie$, the set of all $k$-planes that can be obtained by transforming $\Pi$. For a generic plane $\Pi$, with $k \geq 2$, \ie, one that has no rotational symmetries, the orbit $\mathcal{O}_\Pi$ is essentially a copy of $SO(3)$ since distinct $SU(2)$ elements give rise to distinct planes, except when they differ only by a sign. In the presence of a discrete rotational symmetry group $\Gamma$, however, two rotations $R$ and $R'$ that differ by a symmetry, $R'=R R_{(i)}$ with $R_{(i)} \in \Gamma$, give identical results when applied to $\Pi$ (via their lift to $SU(2)$), and $\mathcal{O}_\Pi$ is then identified with the quotient space $SO(3)/\Gamma$. 

The above two ingredients may be brought together in the following
\begin{theorem}
\label{thWZhol}
Let $\Pi$ be a 1-anticoherent spin-$s$ $k$-plane, with non-trivial rotation symmetry group $\Gamma=\{R_0=I,R_1,\ldots,R_p\}$ and denote by $\Pi(t)=R(t)(\Pi)$ the curve in $\text{Gr}(k,N)$  obtained by a sequence of rotations $R(t) \in SO(3)$ of $\Pi$, that starts, at $t=0$, at the identity, and ends, at $t=T$, on some symmetry rotation $R_m \in \Gamma$, so that $\Pi(T)=\Pi(0)$.
The WZ holonomy associated, via~(\ref{WZholo}), to the above closed curve  only depends on its homotopy class in $\mathcal{O}_\Pi$, and is therefore invariant under continuous deformations that fix its endpoints.
\end{theorem}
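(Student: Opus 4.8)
The plan is to show that the WZ holonomy $U_\text{geo}=PF^{-1}$ depends only on the homotopy class of the curve in the orbit $\mathcal{O}_\Pi$, by exploiting the 1-anticoherence condition~(\ref{oap}) to establish that the dynamical factor $F$ is trivial. The key observation is that the curve $\Pi(t)=R(t)(\Pi)$ is generated by rotating a fixed basis $\{\ket{\psi_i}\}$ of $\Pi$, so a natural lift is $\ket{\phi_i(t)}=D^{(s)}(\tilde{R}(t))\ket{\psi_i}$. With this lift, I would compute the WZ connection entries $\mathcal{A}_{ij}(t)=\braket{\phi_i(t)}{\dot\phi_j(t)}$ directly. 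Writing $\dot{\tilde R}\tilde R^{-1}$ as an element of $\mathfrak{su}(2)$, the time derivative $\ket{\dot\phi_j(t)}$ pulls down a linear combination of the generators $S^{(s)}_x,S^{(s)}_y,S^{(s)}_z$ acting on $\ket{\phi_j(t)}=D^{(s)}(\tilde R(t))\ket{\psi_j}$. Because 1-anticoherence is rotationally invariant, the rotated plane $\Pi(t)$ is also 1-anticoherent, so the matrix elements of every $S^{(s)}_a$ between basis vectors of $\Pi(t)$ vanish. Hence $\mathcal{A}_{ij}(t)\equiv 0$ along the entire curve, and therefore $F=\text{Pexp}(\int_0^T\td t\,\mathcal{A}(t))=I$.

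With $F=I$, the holonomy reduces to the polar part $P$ of the overlap matrix $Q_{ij}=\braket{\phi_i(0)}{\phi_j(T)}=\bra{\psi_i}D^{(s)}(\tilde R_m)\ket{\psi_j}$, which is determined entirely by the endpoint symmetry rotation $\tilde R_m$ (the lift of $R_m\in\Gamma$) and not by the intermediate path. The next step is to argue that $U_\text{geo}=P$ is a homotopy invariant. Two homotopic curves in $\mathcal{O}_\Pi$ with the same endpoints have the same endpoint rotation $R_m$, hence the same $Q$ and the same $P$, so they trivially give the same holonomy. The more substantive claim is that the converse structure holds: the homotopy class in $\mathcal{O}_\Pi=SO(3)/\Gamma$ is precisely what is captured. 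Here I would invoke the lifting of the curve to $SU(2)$: a closed loop in $SO(3)/\Gamma$ lifts to a path in $SU(2)$ from $I$ to some element $\tilde R_m$ in the preimage of $\Gamma$, and the discrete ambiguity in this lift (the choice of element in the fiber over $R_m$) is exactly the homotopy label. Since $P$ is built from $\tilde R_m$, it reads off this discrete datum.

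I expect the main obstacle to lie in the homotopy-invariance argument rather than the vanishing of $F$. One must be careful that deforming the curve while keeping its endpoints fixed in $\mathcal{O}_\Pi=SO(3)/\Gamma$ can nonetheless change the lifted endpoint in $SU(2)$ between the two elements $\pm\tilde R_m$ covering $R_m$, and that this sign is genuinely a homotopy invariant governed by $\pi_1(SO(3)/\Gamma)$. The subtlety is that $P$ must be shown to be insensitive to continuous deformations but sensitive only to this discrete topological label — in other words, that a contractible loop yields $P=I$ while a noncontractible one yields a nontrivial $P$ fixed by the homotopy class. I would make this precise by noting that $P$ depends continuously on the endpoint lift $\tilde R(T)$, which for fixed endpoint $R_m$ in $\mathcal{O}_\Pi$ can only take discrete values as the curve is deformed, and those values are indexed by the homotopy class; continuity then forces $P$ to be constant on each class. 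The cleanest route is to combine the vanishing of $\mathcal{A}$ (so that no path-dependence enters through $F$) with the observation that the singular-value/polar decomposition of $Q$ varies continuously with the endpoint and is therefore locally constant under endpoint-fixing deformations, completing the proof.
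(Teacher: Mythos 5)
Your proposal follows essentially the same route as the paper's proof: the rotational invariance of 1-anticoherence forces the WZ connection to vanish identically along the rotated curve, so $F=I$ and the holonomy reduces to the polar part of the overlap matrix $Q$, which is determined by the lifted endpoint $\tilde{R}(T)\in SU(2)$ alone. Your third paragraph simply spells out, via the homotopy lifting property of the covering $SU(2)\to SO(3)/\Gamma$, the final step that the paper dismisses with ``evidently,'' so the two arguments coincide in substance.
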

\begin{proof}
Let $\{\ket{\phi_i}\}$, $i=1,\ldots,k$, be a basis in $\Pi=\Pi(0)$, then $\Pi(t)$ is generated, by definition, by the kets $\{\ket{\phi_i(t)}=D^{(s)}(R(t))\ket{\phi_i} \}$, where, for $R(t)=R_{\mathbf{m}(t)}$,
\begin{equation}
\label{Dsdef}
D^{(s)}(R(t))=e^{-i \mathbf{m}(t) \cdot \mathbf{S}^{(s)}}
\, ,
\end{equation}
the direction of $\mathbf{m}(t)$ defining the rotation axis, while its modulus, in the interval $[0,\pi]$, the corresponding rotation angle. Then we get for the time derivative $\ket{\dot{\phi}_j(t)}=-i \mathbf{m}'(t) \cdot \mathbf{S}^{(s)}\ket{\phi_j(t)}$, where $\mathbf{m}'(t)$ is given by a complicated expression, the details of which are immaterial to the present argument, since
we may already conclude that the WZ connection vanishes for all times,
\begin{equation}
\label{Aijnull}
\mathcal{A}_{ij}(t)=-i \bra{\phi_i(t)}  \mathbf{m}'(t) \cdot \mathbf{S}^{(s)} \ket{\phi_j(t)}=0
\, ,
\end{equation}
by virtue of the 1-anticoherence assumption about $\Pi$, and the rotational invariance of the concept. Then the second factor in the \rhs{} of~(\ref{WZholo}) is equal to the unit matrix, and $U_\text{g}=Q_P$, which, evidently, is invariant under continuous deformations of the curve $\Pi(t)$ that fix its endpoints.
\end{proof}

The above theorem  forms the basis for the applications to quantum computing we discuss later on. Before getting to that part though, there is still one final ingredient missing that needs to be incorporated in our discussion, and which is captured in the following two related questions:
\begin{enumerate}
\item
How do we identify closed curves in the Grassmannian?
\item
Given a spin-$s$, $k$-plane in $\mathcal{H}$, generated by the kets $\{ \ket{\phi_i}\}$, $i=1, \ldots, k$, how do we identify its possible rotational symmetries?
\end{enumerate}
Both questions are nontrivial, because we have been specifying a plane giving a basis in it, and there are many bases generating the same plane. Thus, for a curve $\Pi(t)$ in the Grassmannian, with the plane $\Pi(t)$ being generated by the kets $\{ \ket{\phi_i(t)}\}$, $i=1, \ldots, k$, one may have $\ket{\phi_i(T)} \neq \ket{\phi_i(0)}$, and, yet, $\Pi(T)=\Pi(0)$. Similarly, the  basis kets rotated by a particular rotation $R_m$ might not coincide with the original ones, $D^{(s)}(R_m) \ket{\phi_i} \neq \ket{\phi_i}$, but the plane they generate after the rotation might be identical with the original one, $R_m(\Pi)=\Pi$. Clearly, to be able to describe efficiently loops in $\text{Gr}(k,N)$, we need an intrinsic characterization of a $k$-plane, that brings to the forefront its transformation properties under rotations. 
For the case $k=1$, \ie, when dealing with physical states in $\mathbb{P}$, there is such a characterization, due to Majorana~\cite{Maj:32} (see also~\cite{Ben.Zyc:17,Chr.Guz.Ser:18}). It is well known that spin-1/2 states can be represented as points on the Bloch sphere, essentially via their spin expectation value, 
\begin{equation}
\label{ponBs}
\pket{\psi} \mapsto \hat{n}=2\bra{\psi}\mathbf{S}\ket{\psi}
\, .
\end{equation}
It is also a fact that the quantum states of a system of $2s$ qubits, that are symmetric under permutations of the qubits, are in 1-to-1 correspondence with spin-$s$ states, \eg, the 2-qubit symmetric state $(\ket{+-}+\ket{-+})/\sqrt{2}$ corresponds to the spin-1 state $\ket{s=1,m=0}$ --- a similar correspondence holds for \emph{any} spin-$s$ state $\ket{\psi}$,
\begin{equation}
\label{psisqubits}
\ket{\psi} \mapsto \ket{\psi_1} \vee \ket{\psi_2} \vee \ldots \vee \ket{\psi_{2s}}
\, ,
\end{equation}
where the $\ket{\psi_i}$ are qubit states and the $\vee$-product denotes a symmetrized tensor product, \eg, $\ket{\psi_1}\vee \ket{\psi_2} \equiv \ket{\psi_1} \otimes \ket{\psi_2} +\ket{\psi_2} \otimes \ket{\psi_1})/\sqrt{2}$. Of course, a 1-to-1 correspondence can always be established, in an infinity of ways,  between vector spaces of equal dimension --- the special property of the particular correspondence mentioned above is that it commutes with the action of rotations, namely, given a spin-$s$ state $\ket{\psi}=(\psi_1,\ldots,\psi_N)$, one may rotate it by $R \in SO(3)$ by left-multiplying it with  $D^{(s)}(R)$, or, one may represent $\ket{\psi}$ as in~(\ref{psisqubits}), transform each of the qubits with the spin-1/2 matrix $D^{(1/2)}(R)$, and map the result back into a spin-$s$ state with the inverse mapping --- the results of the two sequences of operations are the same. Since each state in the \rhs{} of~(\ref{psisqubits}) can be represented by a point on the unit sphere, $\ket{\psi}$ itself can be represented by an unordered (because of the symmetrization) set of $2s$ points (\emph{stars}) on the sphere, its \emph{Majorana constellation}. When $\ket{\psi}$ is transformed by $D^{(s)}(R)$ in $\mathcal{H}$, its constellation rigidly rotates by $R$ in physical space. To appreciate the enormous advantage furnished by this representation of rays in $\mathcal{H}$, consider the spin-2 state 
\begin{equation}
\label{psimystery}
\ket{\chi}
=
\left(
\frac{\sqrt{3}-2 \sqrt{6}}{12} ,
\frac{\sqrt{3}+\sqrt{6}}{6},
\frac{1}{2 \sqrt{2}},
\frac{\sqrt{3}-\sqrt{6}}{6},
\frac{4+\sqrt{2}}{4\sqrt{6}}
\right)
\, ,
\end{equation}
and ponder whether it has any rotational symmetries, \ie, whether there exists any rotation $R$ such that $D^{(2)}(R) \ket{\chi}=e^{i\alpha} \ket{\chi}$, $\alpha \in \mathbb{R}$, so that $\pket{\chi} \in \mathbb{P}$ is invariant under $R$ --- admittedly not an easy question. But then, taking a look at the corresponding Majorana constellation, shown in Fig.~\ref{tetrarot:Fig}, one realizes (\eg, by verifying that all stars are equidistant) that it forms a regular tetrahedron, and one immediately concludes that 
$\pket{\chi}$ is invariant, \eg, under a rotation about the $x$-axis by $2\pi/3$. 
\setlength{\figurewidth}{\textwidth}
\begin{figure}
\raisebox{0\totalheight}{%
\includegraphics[angle=0,width=.35\figurewidth]%
{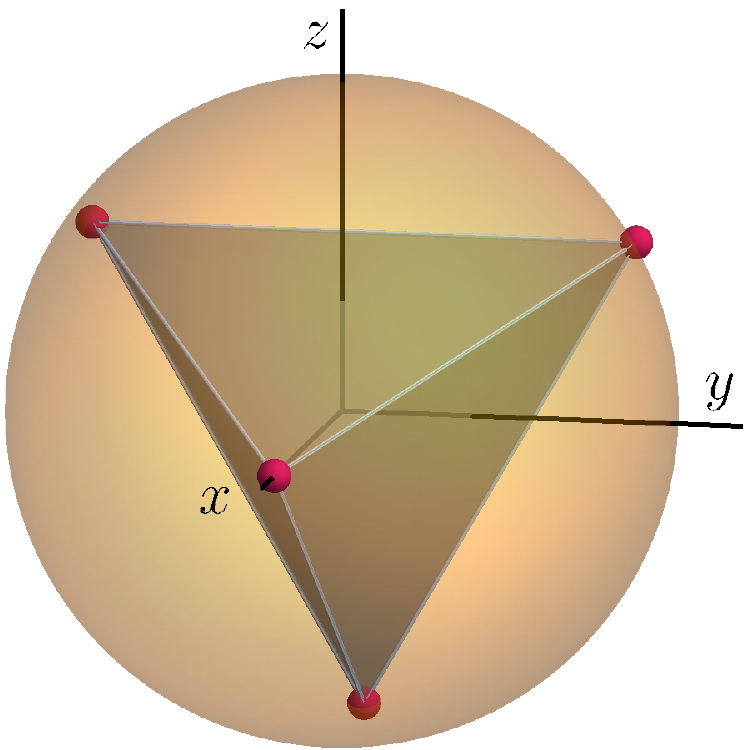}%
}%
\caption{Majorana constellation of the spin-2 state $\ket{\chi}$ of Eq.~(\ref{psimystery}), forming a regular tetrahedron with one vertex on the $x$-axis.}
\label{tetrarot:Fig}
\end{figure}

Could it be that a similar stellar representation exists for spin-$s$ $k$-planes? Indeed it does, as it has been recently shown in~\cite{Chr.Guz.Han.Ser:21}, albeit in a more complicated form. A general spin-$s$ $k$-plane can be uniquely represented by a family of constellations, of various spins, the values of which depend on both $s$ and $k$, each of which is weighted by a complex number. If one arbitrarily orders the constellations, one may treat these complex weights as a pseudo-spinor, and assign to it a ``spectator'' constellation, so that the above $k$-plane representation is purely visual. There is a way of doing this such that under rotations, the Majorana constellations rotate rigidly, while the spectator constellation remains invariant. Thus, any rotational symmetries of the $k$-plane are conveniently encoded in its multiconstellation.

As a concrete example of the above general procedure, consider the spin-2 2-plane $\Pi_{\text{NOT}} \in \text{Gr}(2,5)$, generated by 
\begin{align}
\label{PiNOT1}
\ket{\psi_1}
&=
\frac{1}{\sqrt{3}} (1,0,0,\sqrt{2},0)
\\
\label{PiNOT2}
\ket{\psi_2}
&=
\frac{1}{\sqrt{3}} (0,-\sqrt{2},0,0,1)
\, .
\end{align}
The Majorana constellations of these states are both regular tetrahedra, antipodal to each other, with one vertex on the $z$-axis (see image on the left in Fig.~\ref{PiNOT:Fig}). Following the procedure detailed in~\cite{Chr.Guz.Han.Ser:21}, we find that the vector space inside which the spin-2 2-planes are naturally embedded (this is the \emph{Pl\"ucker embedding}, see, \eg, p.{} 43 of~\cite{Sha:13}, or p.{} 110 of~\cite{Jac:10}) is 10 dimensional, and splits into  spin-3 and spin-1 subspaces. For   $\Pi_{\text{NOT}}$ itself we find the components
\begin{equation}
\label{PiNOTcomp}
\Pi_{\text{NOT}}=\frac{1}{3}\left( \big(-\sqrt{2},0,0,\sqrt{5},0,0,\sqrt{2} \big) \big(0,0,0 \big) \right)
\, ,
\end{equation}
where we have grouped together the components of each spin multiplet. Since the spin-1 component vanishes in this case, the multiconstellation of $\Pi_{\text{NOT}}$ only has a spin-3 constellation, shown on the right in Fig.~\ref{PiNOT:Fig} --- this is a regular octahedron, which fully represents the rotational symmetries of the 2-plane.  Accordingly, the pseudo-spinor of relative weights is $(1,0)$, and the spectator constellation, interpreting the pseudo-spinor as a spin-1/2 state, consists of a single star at the north pole. 

It is easily checked that $\Pi_\text{NOT}$ is 1-anticoherent (as defined in~(\ref{anticohPi})), so our theorem~\ref{thWZhol} may be used. 
We give an example of a cyclic evolution, involving a symmetry rotation, and producing (as WZ holonomy) a toponomic logical NOT gate.
Consider, to that effect, a sequence of rotations $R(t)=R_{\pi t \hat{\mathbf{y}}}$ applied to $\Pi_{\text{NOT}}$, that starts, at $t=0$, at the identity, and ends, at $t=1$, on the symmetry rotation $R(1)=R_{(0,\pi,0)}$, \ie, around the $y$-axis by $\pi$. The WZ holonomy for this curve is computed, from~(\ref{WZholo}), to be 
$\sigma_x$, \ie, the logical gate NOT, and this result is invariant under continuous, however large, perturbations of $R(t)$ --- accordingly, this realization of the NOT gate is totally immune to noise.
\setlength{\figurewidth}{\textwidth}
\begin{figure}
\raisebox{0\totalheight}{%
\includegraphics[angle=0,width=.23\figurewidth]%
{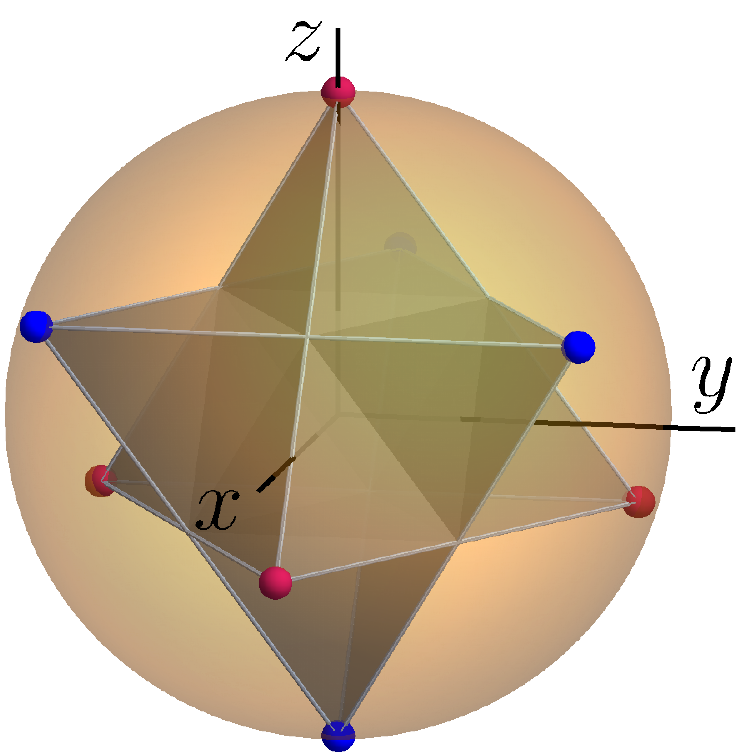}%
}
$\phantom{\,}$
\raisebox{0\totalheight}{%
\includegraphics[angle=0,width=.23\figurewidth]%
{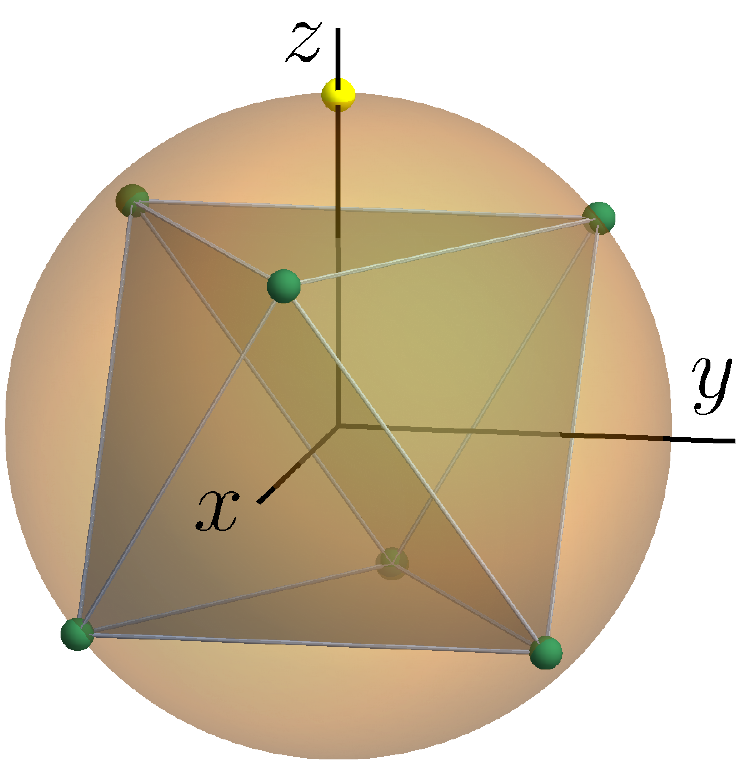}%
}%
\caption{Left: Majorana constellations of the states $\ket{\psi_1}$ (red), $\ket{\psi_2}$ (blue) --- see Eqs.~(\ref{PiNOT1}), (\ref{PiNOT2}). Right: Majorana-like spin-3 constellation of the plane $\Pi_{\text{NOT}}$, generated by $\ket{\psi_1}$, $\ket{\psi_2}$ --- it forms a regular octahedron. The spin-1 component of $\Pi_{\text{NOT}}$ is zero, hence, its spectator constellation consists of a single star at the north pole (yellow).}
\label{PiNOT:Fig}
\end{figure}

It is worth emphasizing that we were able to identify the above  symmetry rotation of $\Pi_{\text{NOT}}$ by inspection of its multiconstellation. It is true that the particular basis $\ket{\psi_{1,2}}$ used already makes this symmetry obvious, but it should be kept in mind that there is no guarantee that each of the elements of an arbitrary basis of a plane $\Pi$ possesses the rotational symmetries of $\Pi$. For example, consider the alternative basis of $\Pi_{\text{NOT}}$
\begin{align*}
\ket{\psi_a}
&=
\cos\frac{\pi}{3} \ket{\psi_1}+\sin\frac{\pi}{3} \ket{\psi_2}
=
\frac{1}{\sqrt{6}} (\frac{1}{\sqrt{2}},-\sqrt{3},0,1,\frac{\sqrt{6}}{2})
\\
\ket{\psi_b}
&=
\cos\frac{\pi}{6} \ket{\psi_1}-\sin\frac{\pi}{6} \ket{\psi_2}
=
\frac{1}{\sqrt{6}} (\frac{\sqrt{6}}{2},1,0,\sqrt{3},-\frac{1}{\sqrt{2}})
\, ,
\end{align*}
with the corresponding constellations shown in Fig.~\ref{PiNOT2:Fig} --- neither of the basis states has the symmetry $R_{(0,\pi,0)}$ of $\Pi_{\text{NOT}}$.
\setlength{\figurewidth}{\textwidth}
\begin{figure}
\raisebox{0\totalheight}{%
\includegraphics[angle=0,width=.23\figurewidth]%
{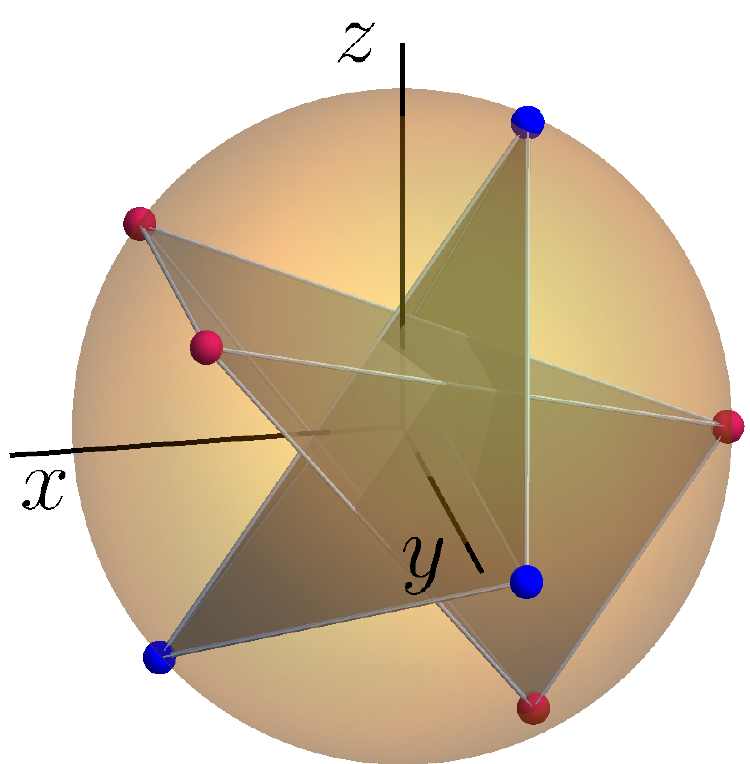}%
}
\caption{Majorana constellations of the states $\ket{\psi_{a}}$ (red), $\ket{\psi_b}$ (blue), which form an alternative basis of $\Pi_{\text{NOT}}$.} 
\label{PiNOT2:Fig}
\end{figure}

As a second example, consider the spin-5 4-plane $\Pi_{\text{CNOT}}$, generated by the states
\begin{align}
\ket{\psi_1} &= \frac{1}{2}( \ket{5,5} + \sqrt{2} i \ket{5,0} +  \ket{5,-5} ) 
\, ,
\label{s5k4a}
\\
\ket{\psi_2} &=  \frac{1}{2}( \ket{5,5} - \sqrt{2} i \ket{5,0} +  \ket{5,-5} ) 
\, ,
\label{s5k4b}
\\
\ket{\psi_3} &= \frac{1}{\sqrt{5}}( \sqrt{2}\ket{5,3} + \sqrt{3} i \ket{5,-2} ) 
\, ,
\label{s5k4c}
\\
\ket{\psi_4} &= \frac{1}{\sqrt{5}}( \sqrt{3} i\ket{5,2} + \sqrt{2} \ket{5,-3} ) 
\, ,
\label{s5k4d}
\end{align}
which can be shown to be 1-anticoherent. An analysis similar to that presented above allows the identification of its rotational symmetries --- space limitations do not allow us to give the full multiconstellation, we only show the principal (\ie, highest-spin) constellation in Fig.~\ref{PrCo2:Fig}.
\setlength{\figurewidth}{\textwidth}
\begin{figure}
\raisebox{0\totalheight}{%
\includegraphics[angle=0,width=.23\figurewidth]%
{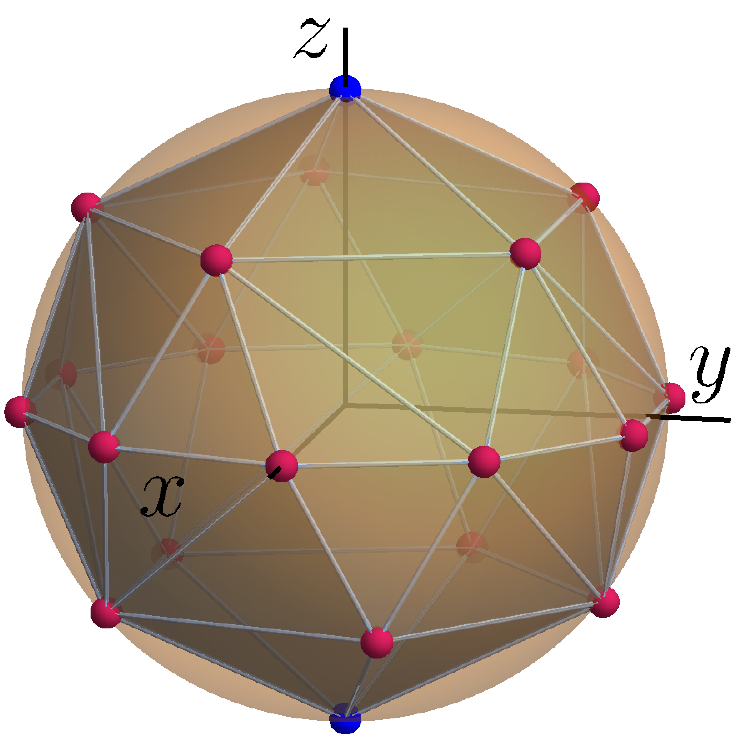}%
}
\caption{Principal constellation of the spin-5 4-plane in~(\ref{s5k4a})-(\ref{s5k4d}). The north and south pole, shown in blue (darker gray in print), are quadruply occupied.} 
\label{PrCo2:Fig}
\end{figure}
We find that the sequence of rotations $R_1(t)=R_{\pi t \hat{\mathbf{x}}}$ ends, at $t=1$, on the symmetry rotation $R_{(\pi,0,0)}$ of $\Pi_{\text{CNOT}}$, with the corresponding holonomy $U_1$ being given on the left in the equation below,
\begin{equation}
U_1=-\left(
\begin{array}{cccc}
1 & 0  & 0 & 0
\\
0 & 1 & 0 & 0
\\
0 & 0 & 0 & 1
\\
0 & 0 & 1 & 0
\end{array}
\right)
\, ,
\quad
U_2=
\left(
\begin{array}{cccc}
 1 & 0 & 0 & 0 \\
 0 & 1 & 0 & 0 \\
 0 & 0 & e^{i 4 \pi/5} & 0 \\
 0 & 0 & 0 & e^{-i 4 \pi /5} \\
\end{array}
\right)
\end{equation}
\ie, it is a CNOT gate, with an extra overall sign, which is, again, totally immune to noise, while $U_2$ on the right above is the holonomy associated to the sequence of rotations  $R_2(t)=R_{2\pi t \hat{\mathbf{z}}/5}$, $0 \leq t \leq 1$, with $R_2(1)$ another symmetry rotation of $\Pi_{\text{CNOT}}$. 

Concluding this short letter we point out the necessity to refine our search for anticoherent planes in the spin-$s$ Hilbert space. The ones presented above, as well as several others that we are aware of,  were found with a variety of \emph{ad hoc} methods that fail to convey a satisfactory geometrical picture of their locus inside the corresponding Grassmannian --- we are currently pursuing such an understanding in the hope of realizing additional noise-tolerant logical gates.

\section*{Acknowledgements}
\label{Ack}
CC, LH, and ESE would like to acknowledge partial financial support from the DGAPA PAPIIT project IN111920 of UNAM. ESE would also like to thank DGAPA UNAM for a postdoctoral fellowship.

\end{document}